\newtheorem{lemma}{Lemma}
\def\footnoterule{\kern 1mm \hrule width 10cm \kern 2mm}
\def\title#1{\vspace{3mm}\begin{flushleft}\vglue-.1cm\Large\bf\boldmath\protect\baselineskip=18pt plus.2pt minus.1pt #1
\end{flushleft}\vspace{1mm} }
\def\author#1{\begin{flushleft}\normalsize #1\end{flushleft}\vspace*{-4pt} \vspace{3mm}}
\def\jz#1#2{{$^{\footnotesize\textcircled{\tiny #1}}$\let\thefootnote\relax\footnotetext{\!\!$^{\footnotesize\textcircled{\tiny #1}}$#2}}}
\def\section{\@startsection{section}{1}{\z@}%
 {-3ex \@plus -.3ex \@minus -.2ex}%
 {2.2ex \@plus.2ex}%
{\normalfont\normalsize\protect\baselineskip=14.5pt plus.2pt minus.2pt\bfseries}}
\def\subsection{\@startsection{subsection}{2}{\z@}%
 {-3ex\@plus -.2ex \@minus -.2ex}%
 {2ex \@plus.2ex}%
{\normalfont\normalsize\protect\baselineskip=12.5pt plus.2pt minus.2pt\bfseries}}
\def\subsubsection{\@startsection{subsubsection}{3}{\z@}%
 {-2.2ex\@plus -.21ex \@minus -.2ex}%
 {1.4ex \@plus.2ex}
{\normalfont\normalsize\protect\baselineskip=12pt plus.2pt minus.2pt\sl}}
\begin{document}
\begin{CJK*}{GBK}{song}
\thispagestyle{empty}
\vspace*{-13mm}
\noindent {\small Journal of computer science and technology: Instruction for authors.
JOURNAL OF COMPUTER SCIENCE AND TECHNOLOGY}
\vspace*{2mm}

\title{Novel Algorithms for Efficient Mining of Connected Induced Subgraphs of a Given Cardinality}

\let\thefootnote\relax\footnotetext{{}\\[-4mm]\indent\ Regular Paper}
\author{Shanshan Wang, Chenglong Xiao, Department of Computer Science, Shantou University}

\noindent {\small\bf Abstract} \quad  {\small \textcolor{blue}{Mining subgraphs with interesting structural properties from networks (or graphs) is a computationally challenging task. In this paper, we propose two algorithms for enumerating all connected induced subgraphs of a given cardinality from networks (or connected undirected graphs in networks). The first algorithm is a variant of a previous well-known algorithm. The algorithm enumerates all connected induced subgraphs of cardinality $k$ in a bottom-up manner. The data structures that lead to unit time element checking and linear space are presented. Different from previous algorithms that either work in a bottom-up manner or a reverse search manner, an algorithm that enumerates all connected induced subgraphs of cardinality $k$ in a top-down manner is proposed. The correctness and complexity of the top-down algorithm are theoretically analyzed and proven. In the experiments, we evaluate the efficiency of the algorithms using a set of real-world networks from various fields. Experimental results show that the variant bottom-up algorithm outperforms the state-of-the-art algorithms for enumerating connected induced subgraphs of small cardinality, and the top-down algorithm can achieve an order of magnitude speedup over the state-of-the-art algorithms for enumerating connected induced subgraphs of large cardinality.}}

\vspace*{3mm}

\noindent{\small\bf Keywords} \quad {\small graph theory, networks, subgraph enumeration, connected induced subgraphs, top-down search}

\vspace*{4mm}

\end{CJK*}
\baselineskip=18pt plus.2pt minus.2pt
\parskip=0pt plus.2pt minus0.2pt
\begin{multicols}{2}

\section{Introduction}
Mining the subgraphs with interesting properties from networks has attracted much attention in recent years. The subgraph enumeration problem arise in many applications, especially in the area of bioinformatics. For example, to detect network motifs, an algorithm for enumerating all connected induced subgraphs with a given cardinality from biological networks was proposed in \cite{1}. Maxwell et al. presented a non-heuristic algorithm for identifying connectivity in molecular interaction networks by enumerating connected induced subgraphs of molecular interaction networks\cite{2}. The authors of \cite{3} proposed an efficient algorithm for enumerating all connected induced subgraphs as the backbone of mining all maximal cohesive subgraphs representing biological subnetworks. The subgraph enumeration algorithms were also introduced and applied in several other fields including computer science, electronics, communication, chemistry, to name a few. Examples of such applications include retrieving connected subgraphs for keyword search over RDF graph in information retrieval\cite{4}, identifying connected convex subgraphs as custom instructions in the design of extensible processors\cite{41}, generating virtual software-defined networks requests in 6G core networks \cite{42}, performing consistency analysis in constrained processing\cite{5}, determining the maximally coarse partitions to exploit categorical structure\cite{51} and predicting crystal structures in models of chiral molecules \cite{52}.

Enumerating all connected induced subgraphs of a given cardinality from a graph (or a network) is essentially a computationally difficult problem. The number of all connected induced subgraphs of a fixed cardinality $k$ is exponentially large in $k$. The previous study shows that there exists an upper bound of $n\cdot\frac{(e\Delta)^{k}}{(\Delta-1)k}$ on the number of connected induced subgraphs of cardinality $k$, where $\Delta$ is the maximum degree of a connected graph $G$ and $n$ is the number of vertices in $G$\cite{6}. In this paper, to be consistent with the notations presented in \cite{7}, we will denote the task of enumerating all connected induced subgraphs of cardinality $k$ of $G$ as E-CISE.

In this paper, we first present a variant algorithm of the previous well-known algorithm\cite{1}. In the algorithm, we use a guarding set to record the visited vertices. The algorithm incrementally enumerates the required subgraphs by adding a neighbor vertex of the current subgraph (the neighbor vertex must not be in the guarding set). In such a way, each connected induced subgraph is enumerated and no duplicates are produced. Different from the existing algorithms in the literature that either work in a bottom-up manner or in a reverse search way, we propose a top-down algorithm that enumerates all connected induced subgraphs of size $k$ by deleting so-called deletable vertices (vertices are non-articulation points) from large connected subgraphs until the size of the current subgraph is $k$.

The rest of the paper is organized as follows. The state-of-the-art algorithms are introduced in Section II. Section III first presents the notations and preliminaries. Then, we describe the proposed variant algorithm and the proposed top-down algorithm in detail. The proof of the correctness and the complexity of the top-down algorithm is also presented. The experimental comparison and analysis are provided in Section IV. Section V discusses future work. Finally, Section VI concludes the paper.

\section{Related Works}

Since the number of connected induced subgraphs can be exponential in the worst case, efficient enumeration of such subgraphs is quite necessary. Much work has been spent on enumerating connected induced subgraphs in recent years. Scanning the literature for algorithms enumerating connected induced subgraphs, most known algorithms perform the enumeration procedure in a bottom-up manner: starting with a single-vertex subgraph $S=\{v_i\}$, larger connected induced subgraphs are successively formed by adding neighbor vertices. However, the crucial problem in the enumeration procedure is that subgraphs may be enumerated multiple times. Hence, how to avoid duplicates is the main domination of the efficiency of these algorithms.

Wernicke introduced an algorithm for enumerating all size-k connected induced subgraphs \cite{1}. The algorithm starts by assigning each vertex a number as a label. The larger subgraphs are generated by absorbing vertices from $V_{Extension}$ to smaller subgraphs. The vertices in $V_{Extension}$ set should satisfy two conditions: their assigned number must be larger than that of $v$ ($v$ is the vertex that the larger subgraphs originated from) and they can only be the neighbors of the newly added vertex but not to a vertex in parent subgraph. The most recent literature\cite{7} analyzed Wernicke's algorithm (denoted as $Simple$) and showed that the algorithm solves the enumeration of all size-k connected induced subgraphs with a delay of $O(k^{2}\Delta)$, where $\Delta$ is the maximum degree of $G$. A variant of $Simple$ called $Simple-Forward$ was also proposed in \cite{7}. The variant algorithm selects the vertex from the front of $V_{Extension}$ instead of from the back as in $Simple$. A slightly different algorithm called $Pivot$ was proposed in \cite{8}. In the enumeration procedure, the algorithm expands the subgraph by adding the neighbors of the selected active vertex.

\begin{figure*}[t]
  \centering
  \includegraphics[width=0.30\hsize=0.8]{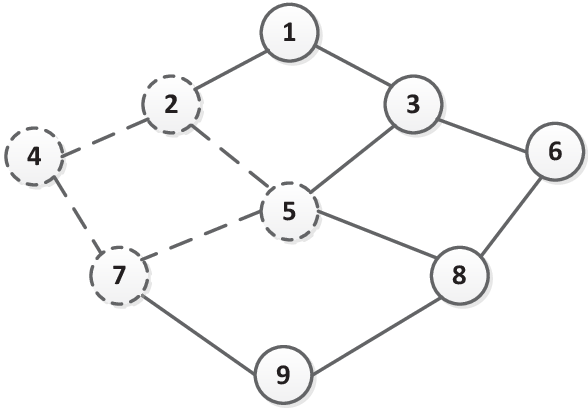}
  \caption{An example of undirected graph with 9 vertices and 12 edges.}
  \label{f2}
\end{figure*}

Uno presented a binary search tree-based algorithm for enumerating all connected induced subgraphs\cite{9}. For each given vertex $v$, the connected induced subgraphs are divided into two groups: subgraphs containing $v$ and subgraphs not containing $v$. The former group can be further recursively enumerated by dividing the subgraphs into two groups based on another neighbor vertex $u$. The latter group can be recursively enumerated by deleting $v$. The author has shown that the time complexity is $O(1)$ for each solution by amortization. However, the solutions can only be produced at the leaves of the search tree.

In \cite{3} the authors refer to the vertex $s$ with the smallest vertex label in a vertex set as the anchor vertex and the vertex $u$ with the longest shortest path to the anchor vertex as the utmost vertex. The algorithm expands the subgraphs by adding only the neighbor vertex $v$ that either has a longer distance to $s$ than the distance from $s$ to $u$ or is lexicographically greater than $u$ if $v$ and $u$ have the same distance to $s$.

However, despite previous work to propose new algorithms to enumerate connected induced subgraphs of cardinality $k$, extensive experiments carried out in recent paper \cite{7} showed that the variant algorithm proposed in \cite{7} $Simple-Forward$ is slightly faster than the algorithm $Simple$ \cite{1} in practice, and is the fastest one among all the existing algorithms.

The aforementioned algorithms enumerate the connected induced subgraphs by incrementally expanding the subgraphs. These algorithms can be classified as bottom-up algorithms. Elbassuoni introduced an algorithm specifically targeting to enumerate connected induced subgraphs of cardinality $k$ based on the supergraph method using reverse search\cite{11}. The algorithm initially generates a connected induced subgraph with $k$ vertices. Then, the algorithm (denoted as $RwD$) tries to enumerate every neighbor node in the supergraph by replacing a vertex in the current subgraph with a neighbor vertex of the current graph. The $RwD$ algorithm has a delay of $\mathcal{O}(k\cdot min(n-k,k\Delta)(k(\Delta+\log{n}))$ and requires $O(n+m+k|S|)$ space for storing, where $|S|$ is the number of nodes in a super graph $S$, $n$ is the number of vertices and $m$ is the number of edges. To reduce the required space, a variant of $RwD$ (denoted as $RwP$) was also introduced. The $RwP$ algorithm has a larger delay of $\mathcal{O}((k\cdot min(n-k,k\Delta))^{2}(k(\Delta+\log{n}))$ but it requires $O(n+m)$ space. However, the extensive experiments carried out in \cite{7} reveal that both $RwD$ and $RwP$ are significantly slower than the aforementioned algorithms that work in a bottom-up manner.

\textit{Problem E-CISE}: Given an undirected graph $G = (V,E)$, enumerate all connected vertex set of cardinality $k$ , $CIS(G,k)=\{U|U\subseteq V~and~G(U)~is~connected~and~|U|=k\}$.

\begin{algorithm*}[t]
  \caption{The $VSimple$ Algorithm}
  \KwIn{An undirected graph $G=(V,E)$, $Y$: The guarding set recording the considered vertices}
  \KwOut{A set of enumerated connected induced subgraphs of cardinality $k$}
  \tcc{enumerate all connected induced subgraphs of cardinality $k$ in a bottom-up manner}
  \textbf{Procedure} $Enumerate(G(V,E))$\\
  \For{each vertex $v_i\in V$}
  {
     $VSimple(\{v_i\},N(v_i)\backslash Y,Y)$\;
     $Y = Y \cup \{u\}$\;
  }

  \tcc{recursively  expand the subgraph by adding a neighbor not in $Y$}
  \textbf{Procedure} $VSimple(C,N,Y)$\\

   \If{$|C|=k$}
   {
      output $C$\;
      return True\;
   }
  hasIntLeaf = false\;
  \For{each vertex $u\in N$}
  {
    $C' = C \cup \{u\}$\;
    delete $u$ from $N$\;
    $N' = N \cup (N(u)\backslash C'\backslash Y)$\;
    \If{ $VSimple(C',N',Y)$ = True}
    {
        hasIntLeaf:=True\;
    }
    \Else
    {
       break\;
    }
    $Y = Y \cup \{u\}$\;
    \If{$|V|-|Y|<k$}
   {
       break\;
   }
  }
  return hasIntLeaf\;
\end{algorithm*}

\section{The Proposed Algorithms}

We start this section by presenting some necessary notations and preliminaries used in the proposed algorithms. Subsection B briefly presents a variant algorithm of the previous well-known algorithm \cite{1}. Subsection B presents the proposed top-down algorithm in detail. The correctness and the time complexity of the algorithm are also presented in Subsection C.

\subsection{Preliminaries}
Let $G=(V,E)$ be an undirected graph, where $V=\{v_1,v_2...,v_n\}$ is the set of vertices, and $E\subseteq V\times V$ is the set of edges. The cardinality of a (sub)graph is the number of its vertices. For a vertex set $U\subseteq V$, $G(U)$ denotes the subgraph of $G$ induced by $U$. All the edges of $G$ with endpoints in $U$ are included in $G(U)$. A graph is said to be connected if and only if any two vertices in the graph are connected to each other by at least one path. The set of neighbors of a vertex $v$ is denoted by $N(v):=\{u|\{u,v\}\in E\}$. The set of neighbors of a vertex set $X$ is denoted by $N(X):=\cup_{v\in X}N(v)\setminus X$.

Figure 1 shows an example of an undirected graph. The dashed edges and vertices represent the connected subgraph induced by $\{2,4,5,7\}$. The set of neighbors of $2$ is $N(2)=\{1,4,5\}$. The set of neighbors of $\{2,4,5,7\}$ is $N(\{2,4,5,7\})=\{1,3,8,9\}$.

\subsection{$VSimple$ Algorithm}
In this subsection, we propose a variant algorithm for enumerating all connected induced subgraphs of cardinality $k$ from an undirected graph. In this paper, we assume the input undirected graph is connected, if not we can simply deal with each connected component separately. The pseudo-code of the algorithm is presented in Algorithm 1. The algorithm accepts the undirected graph $G(V, E)$ as input, and outputs all the connected induced subgraphs of cardinality $k$ (lines 6-7, Algorithm 1).

\begin{figure*}[t]
  \centering
  \includegraphics[width=0.48\hsize=0.8]{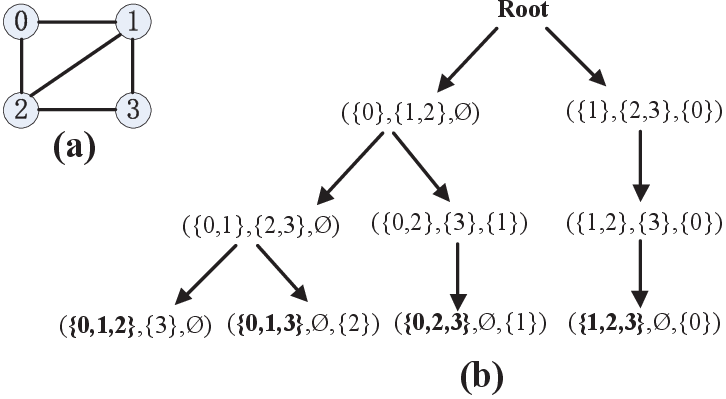}
  \caption{Enumeration tree of $VSimple$ for enumerating all connected induced subgraphs of cardinality 3.(a) a sample graph (b) the enumeration tree on the sample graph. Nodes of the tree indicate enumeration nodes, i.e. the call $VSimple(C,N,Y)$. The enumerated connected induced subgraphs of cardinality 3 are highlighted in bold.}
  \label{f2}
\end{figure*}

The algorithm works in a bottom-up manner (we denote the algorithm as $VSimple$). All the enumerated connected induced subgraphs are recursively generated. We initially generate every one-vertex subgraph $\{v_i\}$ (lines 2-3, Algorithm 1). Then, we recursively call $VSimple$ to enlarge the subgraph $X$ by adding a neighbor node of $C$. As we know, identifying and ruling out duplicates from a large set is a time-consuming job. To avoid generating duplicates, a guarding set $Y$ is used to record the already considered vertices (in other words, the vertices in $Y$ can not be included in the following iterations and their subsequent recursive calls). In the procedure of expanding $C$, only the neighbor vertex that is not in $Y$ can be added to $C$ to form a larger subgraph $C'$. The set of neighbor vertices of $C'$ is updated by adding the vertices neighbored to the newly added vertex $u$ but not in $C'$ and $Y$ to $N$ (line 13, Algorithm 1). The algorithm stops when the size of the current subgraph is $k$. In this algorithm, we also perform the pruning rule proposed in \cite{7} using a Boolean flag hasIntLeaf (lines 9,15,21, Algorithm 1), and the k-component rule to speed up the enumeration (lines 19-20, Algorithm 1).

We consider a simple example to illustrate the enumeration tree of the proposed algorithm in Figure 2. In this example, we enumerate all connected induced subgraphs of cardinality $3$. The algorithm starts by searching all connected induced subgraphs including vertex $0$. At the search node $VSimple(\{0\},\{1,2\},\emptyset)$, we will first consider all connected induced subgraphs including $\{0\}$ and vertex $1$. After that, all connected induced subgraphs including $\{0\}$ and vertex $2$ are enumerated by deleting vertex $1$ (adding vertex $1$ to $Y$). After enumerating all connected induced subgraphs containing vertex $0$, vertex $0$ is added to $Y$. We then perform the search of all connected induced subgraphs including vertex $1$ and excluding vertex $0$.

\begin{figure*}[t]
  \centering
  \includegraphics[width=0.78\hsize=0.8]{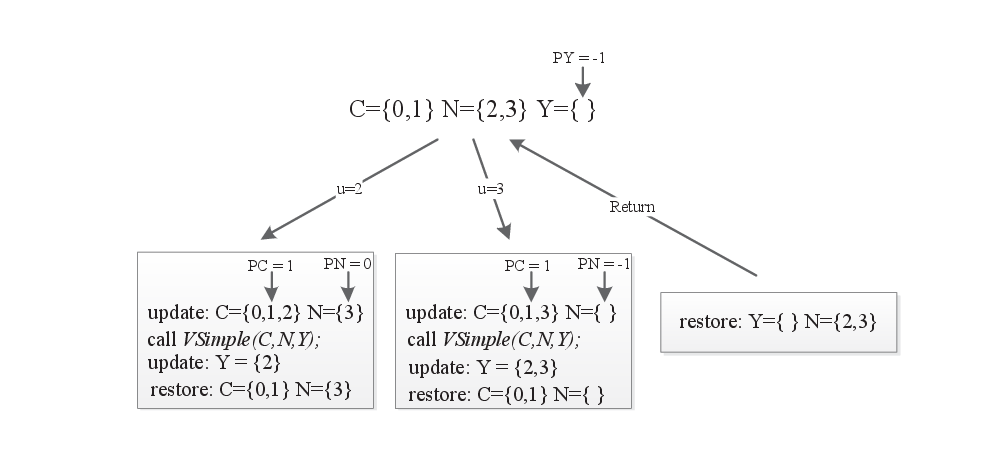}
  \caption{An example of using the proposed data structures for the enumeration node $\{0,1\},\{2,3\},\emptyset$ and its children nodes presented in Fig.2.}
  \label{f2}
   \vspace{-2mm}
\end{figure*}

As the algorithm is a variant of the previous algorithm\cite{1,8}, we omit the proof of the correctness and the complexity of $VSimple$. We highlight the differences between $VSimple$ and $Simple$. In $Simple$ algorithm, the set $N'$ is updated by setting $N'=N\cup (N(u)\setminus N[C])$ where $N[C]$ denotes the closed neighborhood of $C$ ($N[C]:=\cup_{v\in C}N(v)$), whereas the $VSimple$ algorithm updates the set $N'$ by setting $N'=N\cup (N(u)\setminus C'\setminus Y)$ where $Y$ is a set of forbidden vertices.

Now, we describe the data structures used in the implementation to achieve a linear space of $O(n+m)$, and a delay of $O(k^{2}\Delta)$ as achieved for $Simple$\cite{7}. To achieve the linear space of $O(n+m)$, we can use arrays with dynamic length or linked lists to store the original graph $G$ (adjacent lists of $G$), sets $C$, $N$, and $Y$. These arrays or lists are declared as global variables in the implementation. The maximal length of these arrays or lists is $|V|$. Inside each call of $VSimple$, we have a for loop. Inside each iteration of the for loop, we update $C$, $N$, and $Y$ by adding new vertex (or vertices) to the end of the used arrays or lists. After the recursive call of the current iteration (line 14, Algorithm 1), we should restore $C$ and $N$ for the following iterations. To restore $C$ and $N$, we can set pointers that indicate the end position of the arrays or lists before updating (before line 11, Algorithm 1). With the pointers, we can restore $C$ and $N$ by removing the newly added elements that lie between the end position (inclusive) to the pointer (exclusive). For $Y$, we restore it after all the iterations are performed. To restore $Y$, we also set a pointer that indicates the end position of $Y$ before the execution of the for loop (before line 10, Algorithm 1). Aiming at the delay of $O(k^{2}\Delta)$, it is essential to ensure that the update of $N$ can be performed in $O(\Delta)$. To update $N$, we check every element $v$ in $N(u)$. If $v$ is not in $C'$ and $Y$, we add it to $N$. Thus, unit time checking whether an element $v$ is in $C'$ and $Y$ is required. In the implementation, we use auxiliary Boolean arrays of $C$ and $Y$, where the index of the Boolean arrays corresponds to a vertex, and the true value represents the vertex in the set. The auxiliary Boolean arrays are updated and restored simultaneously with the corresponding arrays or linked lists.

Figure 3 presents an example of using the proposed data structures for the enumeration node $(\{0,1\},\{2,3\},\{\})$ and its children nodes presented in Fig.2. As $N=\{2,3\}$, the enumeration node $(\{0,1\},\{2,3\},\{\})$ has two iterations ($u=2$ and $u=3$). Before entering the first iteration ($u=2$), we set a pointer for $Y$ to indicate the end position of $Y$. As $Y$ is empty, we have $PY=-1$. Inside the iteration of $u=2$, we first get the end position of $C$ and $N$ by setting a pointer for $C$ and a pointer for $N$ respectively ($PC=1$ and $PN=0$), update $C$ by adding vertex $2$ to the end of $C$ and update $N$ by deleting $2$ from $N$. Then, we call $VSimple(C, N, Y)$. After the call of $VSimple(C, N, Y)$, we update $Y$ by adding vertex $2$ to the end of $Y$. Before entering the next iteration, we restore $C$ and $N$ using the pointers $PC$ and $PN$ (the vertices that lie between the current end position (inclusive) and the pointer (exclusive) are removed). For the iteration $u=3$, we also perform the update and the restoration of sets $C$,$N$, and $Y$ in the same way. After all iterations are performed and before returning to the enumeration node $(\{0,1\},\{2,3\},\{\})$, we restore $Y$ by removing the vertices that lie between the current end of $Y$ and the pointer $PY$, and restore $N$ by adding the vertices that just removed from $Y$. To support unit time checking whether an element $v$ is in sets $C$ or $Y$, we use a global Boolean array for each set. As an example, for $Y=\{1\}$, its corresponding Boolean array is $BY = \{false,true,false,false\}$. The length of these Boolean arrays is the size of $G$. With these auxiliary Boolean arrays, every vertex $v \in N(u)$ can be checked in unit time.

\begin{algorithm*}[t]
  \caption{The $TopDown$ Algorithm}
  \KwIn{An undirected graph $G=(V,E)$, $Y$: The guarding set recording the deleted vertices}
  \KwOut{A set of enumerated connected induced subgraphs of cardinality $k$}
  \tcc{enumerate all connected induced subgraphs of cardinality $k$ in a top-down manner}
  \textbf{Procedure} $Enumerate(G(V,E))$\\
  $N = findNonArticulationPoints(G)$\;
  $TopDown(G,N,Y)$;

  \tcc{recursively reduce the subgraph by deleting a vertex not in $Y$}
  \textbf{Procedure} $TopDown(C,N,Y)$\\

   \If{$|C|=k$}
   {
      output $C$\;
   }
  \For{each vertex $u\in N$}
  {
    $C' = C \backslash \{u\}$\;
    $X = findNonArticulationPoints(C')$\;
    $N' = X \backslash Y$\;
    $TopDown(C',N',Y)$\;
    $Y = Y \cup \{u\}$\;
    \tcc{look-ahead rule}
    \If{$|Y|=k$}
   {
       \If{isConnected(Y)}{
          output $Y$\; 
       } 
       break\;
   }
  }
\end{algorithm*}

\subsection{$TopDown$ Algorithm}
In this subsection, we present a novel algorithm that enumerates all connected induced subgraphs in a top-down manner. The algorithm accepts the original graph $G$ as input and recursively deletes vertices until the size of subgraphs is $k$. To ensure the connectivity of subgraphs, we only consider the vertices that are non-articulation points (lines 2,9, Algorithm 2). A vertex is said to be an articulation point in a connected graph if the removal of the vertex and its associated edges disconnects the graph. The articulation points (or non-articulation points) of subgraph $C$ can be found in $\mathcal{O}(l\cdot\Delta)$ time by calling the algorithm proposed by Tarjan \cite{12}, where $l$ is the size of $C$ and $\Delta$ is the maximum degree of $G$. In this paper, we call non-articulation points deletable vertices. Inside each call, we iteratively delete the vertices that are deletable vertices of the current subgraph and are not in $Y$ (line 8,11, Algorithm 2). Once a vertex $u$ is deleted from $C$ (line 8, Algorithm 2), we add $u$ to the guarding set $Y$ for the rest of the iterations and their subsequent enumerations (line 12, Algorithm 2). In this sense, the guarding set $Y$ can be viewed as the set of vertices that must be included in the following enumerations. Owing to the guarding set $Y$, we can ensure that each connected induced subgraph is enumerated at most once. Since $Y$ represents the set of vertices that must be contained, we can prune the search space by breaking the iterations when the size of $Y$ is larger than $k$. To further speedup the search by avoiding some recursive calls, we judge the connectivity of $Y$ in advance (the connectivity of $Y$ can be determined in $\mathcal{O}(k\cdot\Delta$) when $|Y|=k$. If $Y$ is connected, we output it and then break the iterations (lines 13-16, Algorithm 2). The pruning rule is called as a look-ahead rule.

\begin{figure*}[t]
  \centering
  \includegraphics[width=0.62\hsize=0.8]{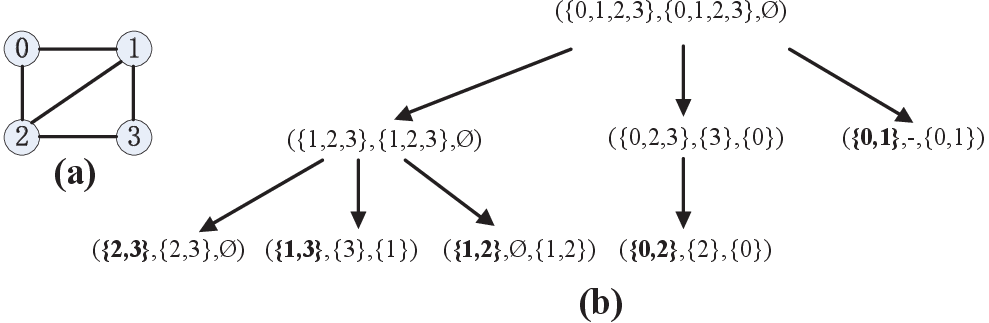}
  \caption{Enumeration tree of $TopDown$ for enumerating all connected induced subgraphs of cardinality 2.(a) a sample graph (b) the enumeration tree on the sample graph. Nodes of the tree indicate enumeration nodes, i.e. the call $TopDown(C,N,Y)$. The enumerated connected induced subgraphs of cardinality 2 are highlighted in bold.}
  \label{f2}
\end{figure*}

Figure 4 illustrates an example of the enumeration tree of the $TopDown$ algorithm. First of all, the algorithm finds the vertices that are non-articulation points of $G$ ($\{0,1,2,3\}$). Then, the algorithm tries to find all the connected induced subgraphs of cardinality 2 excluding vertex $0$ by deleting vertex $0$. After deleting $0$, we have a connected subgraph $\{1,2,3\}$. To get all the connected induced subgraphs of cardinality 2, the algorithm further deletes the vertices that are deletable (non-articulation points). At the end of this branch, we will get three connected induced subgraphs of cardinality 2 (\{2,3\},\{1,3\}, and \{1,2\}). After enumerating all the connected induced subgraphs excluding vertex $0$ of cardinality 2, the algorithm carries out a branch to find all the connected induced subgraphs excluding vertex $1$ and including vertex $0$. This branch finds the subgraph $\{0,2\}$. Finally, the algorithm tries to enumerate all the connected induced subgraphs excluding vertex $2$ and including vertices $0,1$. As the number of vertices that must be included already equals the given cardinality $k=2$, we just need to judge the connectivity of the induced subgraph of $\{0,1\}$. The induced subgraph of $\{0,1\}$ is connected, and the algorithm outputs it.

We use similar data structures presented in $VSimple$ for $TopDown$. The original graph,$C$, and $Y$ are stored in global arrays or linked lists. To support unit time checking, we also use auxiliary Boolean arrays. It should be noted that $N$ is updated by only adding new elements to the end of arrays or linked lists in $VSimple$, however, $N$ is updated by removing some elements (after deleting a vertex $u$, some deletable vertices become articulation points) and adding an element (after deleting a vertex $u$, there is at most one articulation point become a deletable vertex). Thus, we use a global array for $N$ and two extra global arrays or linked lists to record these changes.

\begin{lemma}
Let $S(V, E)$ be a connected undirected graph and $u\in V$ is a deletable vertex of $S$, then there is at most one articulation point of $S$ that becomes a deletable vertex of $S\backslash \{u\}$ after deleting $u$ from $S$.
\end{lemma}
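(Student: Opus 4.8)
The plan is to characterize precisely which articulation points of $S$ can turn into deletable vertices after the removal of $u$, and then to show that this characterization pins down at most one candidate. First I would fix an articulation point $w$ of $S$ and assume it becomes a deletable vertex (a non-articulation point) of $S \setminus \{u\}$. By definition, $w$ being an articulation point of $S$ means that $S \setminus \{w\}$ is disconnected, say into $c \geq 2$ connected components; and $w$ being deletable in $S \setminus \{u\}$ means that $(S \setminus \{u\}) \setminus \{w\} = S \setminus \{u,w\}$ is connected. The strategy is to compare the component structure of $S \setminus \{w\}$ before and after also deleting $u$.

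The core observation I would establish is a small monotonicity fact: deleting a single vertex from any graph can decrease its number of connected components by at most one, and such a decrease occurs only when that vertex forms a singleton component. Concretely, the components of $S \setminus \{w\}$ that do not contain $u$ are untouched by the removal of $u$, while the unique component $K$ that contains $u$ either vanishes entirely (when $K = \{u\}$) or breaks into one or more pieces. Hence the number of components of $S \setminus \{u,w\}$ equals $(c-1) + p$, where $p \geq 0$ is the number of pieces into which $K \setminus \{u\}$ splits, and $p = 0$ exactly when $K = \{u\}$.

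Combining the two facts finishes the argument. Since $S \setminus \{u,w\}$ is connected, we have $(c-1) + p = 1$, which together with $c \geq 2$ forces $c = 2$ and $p = 0$; that is, $\{u\}$ is itself a connected component of $S \setminus \{w\}$. This means every neighbor of $u$ in $S$ equals $w$, so $u$ has degree one and $w$ is its unique neighbor. Because that neighbor is determined by $u$ alone, at most one articulation point $w$ can satisfy the condition, which is exactly the claim. I would also remark that the hypothesis that $u$ is deletable guarantees $S \setminus \{u\}$ is connected, so that speaking of articulation or deletable vertices of $S \setminus \{u\}$ is well posed.

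The step I expect to require the most care is the component-counting observation, and in particular ruling out the tempting but false possibility that deleting $u$ could somehow \emph{reconnect} two components created by deleting $w$. Making explicit that deletion never adds edges and therefore can never merge distinct components of $S \setminus \{w\}$ is the crux of the proof; once this monotonicity is pinned down, the degree-one conclusion and the uniqueness of $w$ follow immediately.
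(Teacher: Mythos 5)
Your proof is correct and rests on the same key fact as the paper's: the component of $S\setminus\{w\}$ containing $u$ must be the singleton $\{u\}$, which forces $w$ to be the unique neighbor of the degree-one vertex $u$. You package this as a direct characterization via component counting, whereas the paper argues by contradiction with two candidate articulation points $w$ and $v$, but the mathematical content is identical (and your version is slightly more informative, since it also shows exactly when the one exceptional articulation point exists).
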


\begin{proof}
We prove this by contradiction. Suppose that there are at least two articulation points ($w$ and $v$) of $S$ that are deletable vertices of $S\backslash \{u\}$ after deleting the deletable vertex $u$ from $S$. As $w$ is an articulation point of $S$ and according to the definition of articulation point, we will have at least two connected components that have no edge connecting to each other after removing $w$ and its associated edges. Assume $S_1$ and $S_2$ are the two connected components after removing $w$. It is clear that $u$ can be in only one of the connected components. We further assume that $u$ is in $S_1$. We now claim that $u$ is the sole vertex in $S_1$ ($S_1\backslash \{u\} = \emptyset$). We can prove this claim by contradiction. As $u$ is a deletable vertex of $S$ and $w$ is a deletable vertex of $S\backslash \{u\}$, $S\backslash \{u\}\backslash \{w\}$ is a connected graph. This denotes that the vertex (or vertices) of $S_1\backslash \{u\}$ is connected to the vertex (or vertices) of $S_2$ in $S\backslash \{u\}\backslash \{w\}$. This contradicts the assumption that $S_1$ and $S_2$ are the two connected components that have no edge connecting them. Since $u$ is the sole vertex in $S_1$, this means that $v\notin S_1$, and $u$ has an edge connecting to $w$. This further indicates that $u$ has an edge connected to $w$ but has no edge connected to $v$. For the articulation point $v$ of $S$, similarly, we can also achieve a conclusion: $u$ has an edge connected to $v$ but has no edge connected to $w$. Therefore, we have arrived at a contradiction.

\end{proof}

With Lemma 1, we know that there is at most one articulation point that becomes a deletable vertex after we delete a deletable vertex.  As $TopDown$ deletes one vertex in each enumeration node and the depth of the enumeration tree is $n-k$, the maximal length of the array that records the added vertices is $n-k$. Furthermore, as the vertices removed from $N$ is the subset of the original graph $G$ and the maximal number of vertices added to $N$ is $n-k$, the maximal length of the array that records the removed vertices is $2n-k$.

\begin{table*}[t]
\centering
\caption{The time complexity and space complexity of $Simple$,$VSimple$, $Simple-Forward$ and $TopDown$}

\begin{tabular}{lll}

\hline

Algorithm &    Time Complexity &    Space Complexity \\

\hline
  $Simple$ &  $\mathcal{O}\left((e(\Delta-1))^{(k-1)}(\Delta+k)\cdot n/k\right)$  &	$\mathcal{O}(|V|+|E|)$	 \\

  $VSimple$ &   $\mathcal{O}\left((e(\Delta-1))^{(k-1)}(\Delta+k)\cdot n/k\right)$ 	&$\mathcal{O}(|V|+|E|)$    \\
 
   $Simple-Forward$ &    $\mathcal{O}\left((e(\Delta-1))^{(k-1)}(\Delta+k)\cdot n/k\right)$ 	&$\mathcal{O}(|V|+|E|)$    \\

   $TopDown$ &  $\mathcal{O}\left({n \choose n-k}\cdot n \cdot\Delta\right)$	&$\mathcal{O}(|V|+|E|)$  \\

\hline
\end{tabular}

\end{table*}

\subsubsection{Correctness and Complexity of the $TopDown$ Algorithm}
In this subsection, we prove the correctness and complexity of the proposed algorithm.
\begin{lemma}
Let $G(V, E)$ be a connected undirected graph, then $TopDown$ correctly outputs all connected induced subgraphs of cardinality $k$ from $G$.
\end{lemma}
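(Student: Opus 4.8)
The plan is to prove three statements and bundle them into ``correctly outputs all'': (i) \emph{soundness}, every emitted set is a connected induced subgraph of order $k$; (ii) \emph{completeness}, every such subgraph is emitted; and (iii) \emph{uniqueness}, none is emitted twice. I would hang everything on a semantic invariant attached to each enumeration node $TopDown(C,N,Y)$: $G(C)$ is connected, $Y\subseteq C$, the set $N$ is exactly the deletable (non-articulation) vertices of $G(C)$ lying outside $Y$, and the node is \emph{responsible} for precisely those connected order-$k$ sets $U$ with $Y\subseteq U\subseteq C$. At the root this reads $Y=\emptyset$, $C=V$, so the root is responsible for every connected order-$k$ subgraph, which is what we want. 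Soundness is then immediate: deleting a non-articulation point preserves connectivity, so every $C$ ever reached induces a connected graph, and output occurs only when $|C|=k$ or, in the look-ahead, when $|Y|=k$ \emph{and} $G(Y)$ is explicitly tested connected; hence every emitted set is a connected induced subgraph of order $k$.

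The engine of completeness is a structural lemma I would establish first: \textbf{if $G(C)$ is connected and $U\subsetneq C$ with $G(U)$ connected, then $C\setminus U$ contains a non-articulation point of $G(C)$.} I would prove it by contracting the connected set $U$ to a single vertex $a$ in $G(C)$; the contracted graph is connected and has at least two vertices, so any spanning tree has a leaf $v\neq a$, and leaves of a spanning tree are non-articulation points. Since deleting $v$ keeps the contracted graph connected while $U$ itself stays connected and $v\notin U$, the graph $G(C\setminus\{v\})$ is connected, so $v\in C\setminus U$ is the desired deletable vertex. This guarantees that whenever the target still strictly contains fewer vertices than the current subgraph there is always a legal deletion step keeping the target reachable; moreover, because $Y\subseteq U$, every deletable vertex of $C\setminus U$ necessarily lies in $N$, so the algorithm never gets stuck before reaching $U$.

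With the lemma I would prove completeness and uniqueness together by induction on $|C|-k$. The base case $|C|=k$ forces $U=C$, output exactly once. For the inductive step fix a target $U$ with $Y\subseteq U\subseteq C$. The for-loop scans $N=\{u_1,u_2,\dots\}$ in a fixed order, and its $i$-th iteration first recurses on $C\setminus\{u_i\}$ with $u_1,\dots,u_{i-1}$ already committed to $Y$, then commits $u_i$. I would show the unique child able to produce $U$ is the one indexed by the \emph{first} $u_i\notin U$: for $j<i$ we have $u_j\in U$, so committing them keeps $Y\subseteq U$ intact while child $j$ (which forces $u_j\notin U$) cannot yield $U$; for $j>i$ the commitment $u_i\in Y$ clashes with $u_i\notin U$. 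The structural lemma supplies such a $u_i\in N$ whenever $U\subsetneq C$, and the recursion on $(C\setminus\{u_i\},\,N',\,Y\cup\{u_1,\dots,u_{i-1}\})$ satisfies the invariant with $|C|$ decreased by one, so the induction hypothesis produces exactly one output of $U$.

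The final and most delicate step is to verify that the look-ahead rule neither drops nor duplicates outputs. The key observation is that once a commitment makes $|Y|=k$, every target still assigned to the remaining iterations (and to the ``all deletable vertices kept'' case) must contain $Y$ and have order $k$, hence must equal $Y$ itself; so the only subgraph those truncated iterations could ever emit is $Y$. I would then argue that in the look-ahead-free algorithm this unique set $Y$ is output at a single descendant leaf if and only if $G(Y)$ is connected (every node carries a connected $C$, so a disconnected $Y$ can never be reached), which is exactly the condition the look-ahead tests before outputting and breaking. Matching the two behaviours shows the look-ahead simply replaces one deep leaf output by one direct output and prunes a subtree containing no other solutions, so completeness and uniqueness survive. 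I expect this coupling among the break, the commitment order, and the connectivity test to be the main obstacle, since it is precisely where a careless argument could silently lose or double-count the set $Y$.
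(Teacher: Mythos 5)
Your proposal is correct and follows the same skeleton as the paper's proof: soundness because deleting a non-articulation point preserves connectivity, completeness via the existence of a deletable vertex outside the target set, and an at-most-once argument based on the include/exclude partition induced by $Y$. Two local differences are worth recording. First, for the key structural fact (if $G(C)$ is connected and $U\subsetneq C$ induces a connected subgraph, then $C\setminus U$ contains a non-articulation point of $G(C)$), the paper argues by contradiction, removing an articulation point $s\in H\setminus C$, splitting the remainder into components, and taking a leaf of a spanning tree of a component avoiding $C$; your direct argument, contracting $U$ to a vertex $a$ and taking a spanning-tree leaf $v\neq a$, is equivalent in substance (both rest on the fact that spanning-tree leaves are not cut vertices) but cleaner and avoids the detour through a contradiction. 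Second, and more substantively, your induction on $|C|-k$ with the ``first $u_i\notin U$'' characterization makes the uniqueness claim precise where the paper only sketches a ``two enumeration paths must diverge at a node that includes versus excludes some $u$'' argument, and you are the only one to verify that the look-ahead rule ($|Y|=k$: test connectivity, output, break) neither drops nor duplicates a solution --- the paper's proof is entirely silent on this rule even though it is precisely where a set could be emitted twice. Indeed, as Algorithm 2 is literally written there is no return after outputting $C$ when $|C|=k$, so a node with $|C|=k$ whose remaining deletable vertices exhaust $C\setminus Y$ would re-emit $C$ through the look-ahead; your treatment of the $|C|=k$ case as a leaf (consistent with the paper's enumeration-tree figure) is exactly the reading needed to make the algorithm, and hence the lemma, correct. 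I see no gaps in your argument.
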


\begin{proof}
Firstly we show that each subgraph $G(C)$ output by $TopDown$ is a connected induced subgraph. According to the enumeration procedure, each smaller subgraph is generated by deleting a vertex that is a non-articulation point from a larger subgraph. It is clear that each enumerated subgraph is a connected induced subgraph.

Secondly, we prove that if $G(C)$ is a connected induced subgraph of cardinality $k$, then $G(C)$ is output by $TopDown$ at least once. To prove this we will show that for every connected set $C\subset H \subseteq V$, there exists at least a vertex $s\in H\backslash C$ is a non-articulation point of $H$. We prove this claim by contradiction. Suppose that every $s\in H\backslash C$ is an articulation point of $H$. Then, we can split $H$ into at least two connected subgraphs $H_1$ and $H_2$ after deleting $s$. It is clear that $C$ can be only in one of these connected subgraphs. Let's assume $C$ is in $H_1$. We can choose any vertex of $H_2$ as the root to get a spanning tree of $H_2$. We also know that the leaf vertex(s) of the spanning tree can never be an articulation point of $H_2$ and $H$. Thus, a contradiction is arrived. Since the claim is true, we can start from the original graph $G$ by calling $TopDown(C, N, Y)$ with a non articulation point $s \in V\backslash C$. Then we recursively follow the execution path by deleting such non-articulation point $s \in H\backslash C$ until $C$ is output.

Thirdly we prove that if $G(C)$ is a connected induced subgraph of cardinality $k$, then $G(C)$ is output by $TopDown$ at most once. We can prove this by contradiction. We assume $C$ is enumerated twice ($C'~and~C''$). Let further assume $C'$ and $C''$ are enumerated by two enumeration paths $S_1$ and $S_2$ respectively. According to the enumeration procedure, if we track the reverse direction of the two enumeration paths, the two enumeration paths should converge at one point. Starting from that converging point, one of the two enumeration paths generates all the connected induced subgraph(s) excluding a vertex $u$, while the other enumeration path generates all the connected induced subgraph(s) including $u$. Thus, $C'\neq C''$, i.e. a contradiction.

\end{proof}

\begin{lemma}
The running time of $TopDown$ is $\mathcal{O}\left({n \choose n-k}\cdot n \cdot\Delta\right)$, where $\Delta$ is the maximum degree of $G$, and $k$ is the size of all connected induced subgraphs.
\end{lemma}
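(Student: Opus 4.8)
The plan is to bound the total running time by counting the number of enumeration nodes in the recursion tree and multiplying by the worst-case cost incurred at each node. The running time is the product of these two quantities, so I would attack them separately.

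First I would bound the number of enumeration nodes. The recursion starts from the full vertex set $V$ and deletes exactly one vertex at each step along every branch, stopping once $|C|=k$. Hence every root-to-leaf path has length $n-k$, and the enumeration tree has depth $n-k$. The key observation is that each node corresponds to an ordered sequence of deletions, but because of the guarding set $Y$ no connected subgraph is ever produced twice (this is exactly the ``at most once'' part of Lemma~2). Rather than bounding the branching factor directly, I would argue that every node of the tree is obtained by choosing which vertices have been deleted so far, and since a subgraph of order $j$ arises from deleting $n-j$ vertices, the number of distinct subgraphs encountered at depth $n-j$ is at most the number of ways to choose the $n-j$ deleted vertices, i.e. $\binom{n}{n-j}$. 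Summing over all depths from $0$ down to $k$ and using the fact that $\binom{n}{n-k}$ dominates the sum (for the relevant range, or after absorbing the sum into the constant), the total number of enumeration nodes is $\mathcal{O}\!\left(\binom{n}{n-k}\right)$.

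Next I would bound the cost per enumeration node. Inside a single call $TopDown(C,N,Y)$, the dominant operations are: computing the non-articulation points of $C'$ via Tarjan's algorithm, which by the remark preceding the lemma costs $\mathcal{O}(l\cdot\Delta)$ where $l=|C'|\le n$, hence $\mathcal{O}(n\Delta)$; forming $N'=X\setminus Y$, which with the unit-time membership arrays described for the data structures costs $\mathcal{O}(n)$; and the look-ahead connectivity test on $Y$, which costs $\mathcal{O}(k\Delta)$. Since $k\le n$, every one of these is $\mathcal{O}(n\Delta)$, so the work attributable to each node is $\mathcal{O}(n\Delta)$. Multiplying the node count by the per-node cost yields $\mathcal{O}\!\left(\binom{n}{n-k}\cdot n\cdot\Delta\right)$, as claimed.

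The main obstacle will be making the node-counting argument rigorous: the recursion tree is built from ordered deletion sequences, so a priori many distinct sequences could reach distinct nodes, and I must be careful that the $\binom{n}{n-j}$ bound on \emph{distinct subgraphs} at a given depth really controls the \emph{number of tree nodes} at that depth. Here the guarding set $Y$ is essential, since it guarantees (as established in Lemma~2) that no two sibling-or-later branches regenerate the same vertex set; I would use this to injectively map tree nodes at depth $n-j$ into the $j$-subsets of $V$ (each node being identified by its deleted set), thereby justifying the $\binom{n}{n-j}$ bound. A secondary point to check is that the summation $\sum_{j=k}^{n}\binom{n}{n-j}$ collapses into $\mathcal{O}\!\left(\binom{n}{n-k}\right)$; I would handle this by noting that for the regime where $k$ is large relative to $n$ (the intended use case of a top-down method) the term $\binom{n}{n-k}$ is the largest, and otherwise fold the remaining terms into the constant.
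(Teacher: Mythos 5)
Your per-node cost analysis coincides with the paper's: both charge each call $\mathcal{O}(n\Delta)$, dominated by the Tarjan computation of non-articulation points, with the output, the set updates, and the look-ahead test absorbed into that bound. Your injectivity claim is also sound: because of the guarding set $Y$, two distinct nodes at the same depth cannot correspond to the same set of deleted vertices (this is exactly the ``at most once'' mechanism of Lemma~2 applied to intermediate subgraphs), so there are at most $\binom{n}{n-j}$ nodes at depth $n-j$.

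The gap is in the final summation. You need $\sum_{j=k}^{n}\binom{n}{n-j}=\sum_{i=0}^{n-k}\binom{n}{i}$ to be $\mathcal{O}\bigl(\binom{n}{n-k}\bigr)$, and this is false in general: for $n-k\ge n/2$ the sum is on the order of $2^{n}$ while $\binom{n}{n-k}$ can be polynomially small (take $k=1$, where the sum is $2^{n}-1$ but $\binom{n}{n-1}=n$), and even for $n-k$ just below $n/2$ the ratio between the sum and its largest term grows like $n$, so it cannot be ``folded into the constant'' --- the number of terms and their relative sizes depend on $n$ and $k$. The paper avoids this by taking a different route for the node count: it bounds only the \emph{leaves} by $\binom{n}{n-k}$ (each leaf is determined by its set of $n-k$ deleted vertices) and then asserts that every inner node of the enumeration tree has at least two children, so that the number of inner nodes is smaller than the number of leaves and the total node count is $\mathcal{O}\bigl(\binom{n}{n-k}\bigr)$ without any summation over levels. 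To repair your argument you would need either this leaf-domination step (and a justification that inner nodes indeed branch at least twice, or that degenerate chains of single-child nodes can be amortized) or some other device relating internal nodes to leaves; the level-by-level count alone does not yield the claimed bound.
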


\begin{proof}
Firstly, we bound the number of nodes in the enumeration tree of $TopDown$. According to the enumeration procedure of $TopDown$, each connected induced subgraph of cardinality $k$ is outputted by deleting $n-k$ vertices. Thus, we may have up to ${n \choose n-k}$ possible solutions. We can also observe that $TopDown$ outputs each connected induced subgraph of cardinality $k$ only at the leaf node of the enumeration tree. Therefore, the maximal number of leaf nodes in the enumeration tree is $\mathcal{O}({n \choose n-k})$. We consider the worst case with $\mathcal{O}({n \choose n-k})$ leaf nodes. Each inner node of the enumeration tree has at least two children ($n-k+1$ children) for $1\leq k\leq n-1$. Hence, the number of inner nodes is smaller than the number of leaf nodes. Consequently, the overall number of enumeration tree nodes of $TopDown$ is $\mathcal{O}\left({n \choose n-k}\right)$.

Secondly, it is sufficient to prove that each call of $TopDown$ without recursive calls is charged with $\mathcal{O}\left(n\cdot\Delta\right)$. It is clear that we can output each connected induced subgraph of cardinality $k$ in $O(k)$ time. To update $X$, we call $findNonArticulationPoints(C')$ to get the vertices that are non-articulation points of $C'$. This call requires at most $\mathcal{O}(n\cdot \Delta)$ time. The update of $C'$, $Y$ and $N'$ in each recursive call require at most $\mathcal{O}(1)$, $\mathcal{O}(1)$ and $\mathcal{O}(n)$ time respectively. We charge these running time to the corresponding child in the enumeration tree. So, we obtain $\mathcal{O}\left(n\cdot \Delta\right)$ time in total per enumeration node. Finally, we conclude that the total running time of $TopDown$ is $\mathcal{O}\left({n \choose n-k}\cdot n \cdot\Delta\right)$.
\end{proof}

\begin{lemma}
   The space complexity of $TopDown$ is $\mathcal{O}(|V|+|E|)$, where $|V|$ is the number of vertices in $G$, and $|E|$ is the number of edges in $G$.
\end{lemma}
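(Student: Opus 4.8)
The plan is to bound separately the contributions to the working memory of $TopDown$: the stored input graph, the global arrays holding the sets $C$, $N$, and $Y$ together with their auxiliary Boolean arrays, the auxiliary arrays that log the changes needed to restore $N$ on backtracking, and the recursion stack. Since the algorithm operates on global data structures rather than copying $C$, $N$, and $Y$ at every recursive call, the central point is to argue that none of these grows beyond linear size, and in particular that the cost of restoring state across the whole recursion does not accumulate. Because storing $G$ as adjacency lists already costs $\mathcal{O}(|V|+|E|)$, it suffices to show that every remaining term is also $\mathcal{O}(|V|+|E|)$.

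First I would account for the global arrays (or linked lists) for $C$, $N$, and $Y$: each has maximal length $|V|$, as do their auxiliary Boolean arrays used for unit-time membership testing, so together they contribute $\mathcal{O}(|V|)$. Next I would handle the two extra global arrays introduced to record the modifications of $N$ so that $N$ can be restored when an iteration returns. This is exactly where the earlier analysis based on Lemma 1 is used: since the recursion depth is $n-k$ and, by Lemma 1, at most one articulation point becomes deletable each time a deletable vertex is deleted, the array recording vertices added to $N$ has length at most $n-k$, while the array recording vertices removed from $N$ (a subset of $V$ together with the at most $n-k$ previously added vertices) has length at most $2n-k$. Both are $\mathcal{O}(|V|)$.

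Finally, each call of $TopDown$ places on the recursion stack only a constant number of pointers/indices into the global arrays, since the actual set contents live in the global structures; the stack depth is $n-k$, so the stack costs $\mathcal{O}(|V|)$. The internal call to $findNonArticulationPoints$ (Tarjan's algorithm \cite{12}) uses only linear working space that is freed on return and is therefore not charged per level. Summing all contributions yields $\mathcal{O}(|V|+|E|)$.

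The main obstacle is the second step. A naive implementation that copied $C$, $N$, and $Y$ at each of the $n-k$ recursion levels, or that stored a full recomputed copy of $N$ at every level in order to restore it on backtracking, would incur $\mathcal{O}((n-k)\cdot|V|)$ space, exceeding the linear bound. The heart of the argument is that Lemma 1 caps the number of vertices \emph{added} to $N$ per deletion at one, so the change logs are bounded by the recursion depth \emph{plus} the graph size rather than by their \emph{product}, which is what keeps the total linear.
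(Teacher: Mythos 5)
Your proof is correct and follows essentially the same approach as the paper: charge $\mathcal{O}(|V|+|E|)$ for the adjacency lists and $\mathcal{O}(|V|)$ for the global sets $C$, $N$, $Y$ and their auxiliary arrays, with the change logs for restoring $N$ bounded via Lemma 1 (lengths $n-k$ and $2n-k$), exactly as the paper argues in the discussion following that lemma. Your write-up is in fact more complete than the paper's own terse proof, which does not explicitly account for the recursion stack or restate the change-log bounds inside the proof of this lemma.
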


\begin{proof}
The space required for storing $G$ is $|V|+|E|$. According to data structures used in the implementation, we know that $C$, $N$, and $Y$  require global sets (linked lists, Boolean arrays, or bit arrays) of maximum size $|V|$. It is obvious that the space complexity of algorithm $TopDown$ is dominated by the space storing the graph. Therefore, the space complexity of algorithm $TopDown$ is $\mathcal{O}(|V|+|E|)$.
\end{proof}

Table 1 presents the time complexity and space complexity of  $Simple$, $VSimple$,  $Simple-Forward$, and  $TopDown$. $Simple$, $VSimple$, and  $Simple-Forward$ have the same time complexity and space complexity. The four algorithms require identical space complexity. By comparing the time complexity, we can observe that the time complexity of $TopDown$ is tighter than the time complexity of $Simple$, $VSimple$, and $Simple-Forward$ in the cases of large $k$.

\begin{table*}[t]
\caption{Characteristics of graphs used in the experiments}
\begin{center}
\begin{tabular}{llll}

\hline

 Size& Graph Name &     $|V|$ &    $|E|$   \\

\hline
Small&  ca-sandi\_auths &   86 &    124    \\

    & inf-USAir97 &   332 &    2126    \\

    &   ca-netscience & 379 &   914  \\

    &  bio-celegans &  453 &    2025 \\
\hline

Medium&      bio-diseasome &   516 &    1188 \\

     & soc-wiki-Vote &   889 &      2914 \\

     & bio-yeast &   1458 &    1948 \\

     & inf-power &   4941 &    6594    \\
\hline
  Large  &  bio-dmela &   7393 &    25569 \\

     & ca-HepPh &   11204 &    117619 \\
      
    &  ca-AstroPh &   17903 &    196972 \\

    &  soc-brightkite &   56739 &  212945 \\

\hline
\end{tabular}
\end{center}
\end{table*}

\section{Experimental Results}

To evaluate the performance of the proposed algorithms, we have compared the algorithms with the state-of-the-art algorithms. In the experiments, we have carried out the comparisons on the efficiencies of the algorithms for enumerating connected induced subgraphs of small cardinality $k$ and large cardinality $k$ respectively. For a fair comparison, all the algorithms including $RSSP$\cite{3}, $Simple$\cite{1}, $Simple-Forward$\cite{7}, $Pivot$\cite{8}, and the proposed algorithms were implemented in the same language-Java. The source codes are available at GitHub\footnote{The source code of the algorithms is available at https://github.com/miningsubgraphs/algorithms}. All experiments were performed on an Intel Core i7 2.90GHz CPU with 8 Gbytes main memory. The reported running times include the time of writing the enumerated sets to the hard drive.

A set of graphs from the Network Repository is used in our limited study\footnote{The dataset used in the experiments is available at http://networkrepository.com}\cite{10}. Similar to \cite{7}, we group the benchmark graphs into three subsets: small graphs with $n<500$, medium-size subgraphs with $500\leq n<5000$, and large graphs with $n\geq 5000$. The characteristics of the benchmarks are shown in Table 2. The columns $|V|$ and $|E|$ indicate the number of vertices and the number of edges of each graph respectively. In the experiments, we set a running time threshold of $600$ seconds. The symbol $-$ in the following tables indicates that the running time exceeds the given threshold of 10 minutes. In the experiments, we noticed that the running time is not steady when the overall running time is very small. Thus, we record the average running time of three runs for the instances with a running time less than one second.

\begin{table*}[t]
\caption{Running time of the algorithms implemented in Python and Java on graph $bio-celegans$ for enumerating all connected induced subgraphs of cardinality $k$  (in seconds)}
\begin{center}
\resizebox{\textwidth}{!}{ 
\begin{tabular}{llllllllll}

\hline

Algorithms &    k=2 &    k=3 &   k=4 & k=5  & k=6 & k=450 & k=451 & k=452\\

\hline
  $Simple-Python (Iteration)$ &   0.007 &       0.263 & 14.926 & - & - & - &135.736 &0.884\\

  $Simple-Forward-Python (Iteration)$ &   0.006 &  0.240 & 13.987 & - & -& - &58.433 &0.338\\

       $Simple-Java (Recursion)$ & 0.011 &   0.087 & 1.775 & 83.843 & - &356.458 &2.746&0.064\\

      $Simple-Forward-Java (Recursion)$ &  0.011 &      0.084 & 1.643 & 77.371 &  - & 294.588 &2.682&0.060\\

\hline
      $|CIS(G,k)|$ &   2025 &     72605 & 3806083 &195573511 & - &14194614&97014&441\\

\hline
\end{tabular}
}
\end{center}
\end{table*}

\begin{table*}[!htb]\footnotesize
\renewcommand\arraystretch{0.7}

\caption{Running time (in seconds) of $Simple$,  $Simple-Forward$, $VSimple$ and $TopDown$  for enumerating all connected induced subgraphs of small cardinality $k$ ($2\leq k\leq 6$)}

\begin{center}

\begin{tabular}{lllllll}
\hline

 Benchmark &$|CIS(G,k)|$ &Cardinality  &$Simple$ & $Simple-Forward$  &  $VSimple$  &  $TopDown$   \\
\hline

                 &   124   &   $k$=  2 &  0.003   &  0.003   &  \textbf{0.003}   &   0.016 \\
                 &   379   &   $k$=  3 &  0.006  &   0.006  &   \textbf{0.006}  &   0.168 \\
 ca-sandi\_auths  &   1422  &   $k$=  4 &  0.008   &  0.008   &  \textbf{0.008}   &   1.407 \\
                 &   5740   &   $k$=  5 & 0.017    & 0.016    &\textbf{0.016}    &   15.986 \\
                 &   23718   &   $k$=  6 & 0.048    &  0.046   & \textbf{0.044}    &   183.949 \\
\hline

                 &   2126   &   $k$=  2 &   0.012  &  0.011   &  \textbf{0.010}   &  0.601  \\
                 &   67827   &   $k$=  3 &  0.106   &  0.103   &  \textbf{0.088}   &  31.785  \\
 inf-USAir97     &   2269621   &   $k$=  4 & 1.176    &  1.116   &  \textbf{1.089}   &  -  \\
                 &   68484518   &   $k$=  5 &  32.454   & 31.790    &  \textbf{29.325}    &  -  \\
                 &   -   &   $k$=  6 &  -   &   -  &   -  &  -  \\
\hline

                 &   914  &   $k$=  2 &   0.007  &   0.006  &   \textbf{0.005}  &  0.370  \\
                 &   4575   &   $k$=  3 &  0.012   &  0.012   &  \textbf{0.012}   &  26.584  \\
 ca-netscience   &   31665   &   $k$=  4 & 0.053    &  0.050   &  \textbf{0.045}   &  -  \\
                 &   244418   &   $k$=  5 & 0.176    &  0.169   & \textbf{0.164}    &  -  \\
                 &   1917058   &   $k$=  6 &  0.878   &  \textbf{0.830}   &  0.861   &  -  \\
\hline

                 &   2025   &   $k$=  2 &  0.011   &  0.011   &  \textbf{0.009}   &   0.935 \\
                 &   72605   &   $k$=  3 &  0.087   &  0.084   & \textbf{0.074}    &  80.201  \\
 bio-celegans    &   3806083   &   $k$=  4 & 1.775    &  1.643   & \textbf{1.517}    &  -  \\
                 &   195573511   &   $k$=  5 & 83.843    & 77.371    &  \textbf{72.221}   &  -  \\
                 &   -   &   $k$=  6 &  -   &  -   &   -  &  -  \\
\hline

                 &   1188   &   $k$=  2 &  0.007   &   0.007  &  \textbf{0.007}   & 0.826   \\
                 &   6758   &   $k$=  3 &  0.017   &  0.016   &   \textbf{0.014}  &  90.191  \\
 bio-diseasome   &   65695   &   $k$=  4 &  0.069   &  0.066   &  \textbf{0.063}   &  -  \\
                 &   765557   &   $k$=  5 &  0.408   &  0.394   &  \textbf{0.376}   &   - \\
                 &   9062333   &   $k$=  6 &  4.121   &  3.895   &   \textbf{3.758}  &  -  \\
\hline

                 &  2914    &   $k$=  2 &  0.011   &  0.012   &  \textbf{0.010}   &  6.250  \\
                 &  45680    &   $k$=  3 &  0.071   &  0.066   &  \textbf{0.059}   &   - \\
 soc-wiki-Vote   &  1121962    &   $k$=  4 &  0.639   & 0.626    &   \textbf{0.623}  &   - \\
                 &  31308165    &   $k$=  5 & 16.994    & 16.061    &  \textbf{14.758}   &  -  \\
                 &  892820902    &   $k$=  6 & 499.905    & 495.006    &  \textbf{430.704}   &  -  \\
\hline

                 &  1948    &   $k$=  2 &  0.009   &  0.009   &  \textbf{0.008}   &  17.728  \\
                 &  11524    &   $k$=  3 &  0.029   & 0.027    & \textbf{0.023}   &  -  \\
 bio-yeast       &  105733    &   $k$=  4 &  0.096   &  0.091   & \textbf{0.082}    & -   \\
                 &  1104980    &   $k$=  5 & 0.536    & 0.509    & \textbf{0.504}    & -   \\
                 &  11718959    &   $k$=  6 &  4.985   & 4.914    & \textbf{4.724}    & -   \\
\hline

                 &  6594    &   $k$=  2 &  0.018   &  0.017   &  \textbf{0.016}   &  -  \\
                 &  17631    &   $k$=  3 & 0.046    &  0.045   &  \textbf{0.042}   &  -  \\
 inf-power       &  63401    &   $k$=  4 & 0.076    &  0.070   &   \textbf{0.068}  &  -  \\
                 &  268694    &   $k$=  5 &  0.207   &  0.206   &  \textbf{0.181}  & -   \\
                 &  1260958    &   $k$=  6 &  0.705   &  0.701   &  \textbf{0.680}  & -   \\
\hline

                 &  25569    &   $k$=  2 &  0.058   &  0.056   &  \textbf{0.052}   & -   \\
                 &  575169    &   $k$=  3 &  0.462   &  0.431   &  \textbf{0.407}   & -   \\
 bio-dmela       &  20943036    &   $k$=  4 & 16.080    &  13.708   &  \textbf{11.699}   & -   \\
                 &  -    &   $k$=  5 &  -   &  -   &  -   &  -  \\
                 &  -    &   $k$=  6 &  -   &  -   &  -   & -   \\
\hline

                 &  117619    &   $k$=  2 &   0.252  &  0.242   &  \textbf{0.239}   &  -  \\
                 &  8560145    &   $k$=  3 &  13.944   & 13.781    &   \textbf{13.352}  &  -  \\
 ca-HepPh        &  -    &   $k$=  4 &  -   &  -   &  -   & -   \\
                 &  -    &   $k$=  5 &  -   &  -   &  -   & -   \\
                 &  -    &   $k$=  6 &  -   &  -   &  -   & -   \\
\hline

                 &  196972   &   $k$=  2 &  0.375   &  0.336   & \textbf{0.306}  & -   \\
                 &  10044854    &   $k$=  3 &  14.485   &   13.915  &  \textbf{10.952}   & -   \\
 ca-AstroPh      &  -    &   $k$=  4 &  -   & -    &  -   &  -  \\
                 &  -    &   $k$=  5 &  -   & -    &  -   & -   \\
                 &  -    &   $k$=  6 &  -   & -    &  -   & -   \\
\hline

                 &   212945   &   $k$=  2 &  0.328   &  0.342   & \textbf{ 0.268 }  &  -  \\
                 &   12432832   &   $k$=  3 & 16.935    &  11.132   &  \textbf{10.963}   & -   \\
 soc-brightkite  &   -   &   $k$=  4 &  -   &  -   &  -   &  -  \\
                 &   -   &   $k$=  5 &  -   &  -   &  -   & -   \\
                 &   -   &   $k$=  6 &  -   &  -   &  -   &  -  \\
\hline
\end{tabular}
\end{center}
\end{table*}

First of all, we performed a straightforward comparison. We have downloaded the source code of $Simple$ and $Simple-Forward$ provided by the authors of \cite{7}. These codes were implemented in Python. We also implemented $Simple$ and $Simple-Forward$ in Java. Please note that, in the Java version, the algorithms are implemented recursively (in the Python version, $Simple$ and $Simple-Forward$ were implemented iteratively), and we adopted the data structures proposed in this paper for $Simple$ and $Simple-Forward$. In the experiments, the Python codes were compiled and executed in PyCharm 2021.2.2 (Community Edition) with Python 3.10.0. The Java codes were compiled and executed in Eclipse 2020-09 with JDK-15.0.1. Preliminary experimental results showed that the algorithms implemented in Java with the proposed data structures can achieve orders of magnitude speedup over the algorithms implemented in Python. As an example, Table 3 shows the running time of the algorithms implemented in Java and Python on the benchmark graph  $bio-celegans$.The number of all connected induced subgraphs of cardinality $k$ in the graph is given in the row $|CIS(G,k)|$. Since the algorithms implemented in Java with the proposed data structures are more efficient, we present the comparison results of algorithms implemented in Java in the following section.

\subsection{Enumerating All Connected Induced Subgraphs of Small Cardinality $k$}
In the implementation, we adopted three basic containers provided in Java (ArrayList, BitSet, and HashSet) for storing the elements respectively. The experimental results reveal that implementation with ArrayList which uses a dynamic array for storing the elements is overall the most efficient one. Our preliminary experiments and the experimental results presented in \cite{7} showed that $Simple$ and $Simple-Forward$ outperform $RSSP$ and $Pivot$ for enumerating connected induced subgraphs of cardinality $k$ on almost every instance. Thus, in this section, we only report the execution time of the proposed algorithms, $Simple$ and $Simple-Forward$ for enumerating all connected induced subgraphs of small cardinality $k \in\{2,3,4,5,6\}$.

As expected, the four algorithms produced the same set of all connected induced subgraphs of small cardinality $k$. The running time of the algorithms for enumerating all connected induced subgraphs of small cardinality $k$ from each graph presented in Table 1 is shown in Table 4. The running time is measured in seconds. A large amount of enumerated subgraphs also reveals that the enumeration of all connected induced subgraphs is indeed a computationally difficult task.

From Table 4, we can see that $VSimple$ is overall the fastest algorithm among all the four algorithms. The speedup achieved by $VSimple$ over $Simple$ or $Simple$-$Forward$ is mainly due to fewer computations carried out in the proposed algorithm. To ensure that each subgraph is enumerated exactly once, $Simple$ and $Simple-Forward$ expand a connected induced subgraph $P$ by adding a neighbor vertex with a larger label than that of originated vertex and it must be not in the closed neighborhood of $C$. The update of the set of the closed neighborhood of $C$ ($N[C]=\cup_{v\in C}N(v)$) should be carried out in each iteration of the recursive calls. The restoration of this set is also required before entering the next iteration. This is more time-consuming than maintaining a guard set $Y$ in $VSimple$. $VSimple$ uses a set $Y$ to record the considered vertices. Only a simple insertion operation at the end of $Y$ is required (line 18, Algorithm 1), and the restoration of $Y$ is only performed before returning to the parent node of the enumeration tree (when all the iterations in the current recursive call are finished). $TopDown$ is less efficient than the other three algorithms on the instances of enumerating induced subgraphs with small cardinality $k$. The main reason is that $TopDown$ incurs $n-k$ recursive calls (the depth of the enumeration tree), however, the bottom-up algorithms only incur $k$ recursive calls. Furthermore, the modification of the adjacent lists of the graphs after deleting a vertex is required in $TopDown$. This modification is relatively inefficient.

\begin{table*}[!htb]\footnotesize

\caption{Running time (in seconds) of $Simple$,  $Simple$-$Forward$, $VSimple$ and $TopDown$  for enumerating all connected induced subgraphs of large cardinality $k$ }

\begin{center}

\begin{tabular}{lllllll}
\hline

 Benchmark &$|CIS(G,k)|$ &Cardinality  &$Simple$ & $Simple-Forward$  &  $VSimple$  &  $TopDown$   \\
\hline

                 &   36407   &   $k$=  83 &  0.303   &  0.290   &  0.252   &  \textbf{0.126} \\
ca-sandi\_auths   &   1837   &   $k$=  84 &  0.0410  &   0.038  &   0.039  &  \textbf{0.020} \\
                 &   61  &   $k$=  85 &  0.0006   &  0.006   &  0.006   & \textbf{0.004} \\

\hline

                 &  4685705    &   $k$=  329 &  94.120   &  73.245   &  83.589   &  \textbf{39.795}  \\
   inf-USAir97   &  46371    &   $k$=  330 &   1.044  &  0.959   &  1.039   &  \textbf{0.401}  \\
                 &  305    &   $k$=  331 &  0.047   &  0.038   &  0.042   &   \textbf{0.020} \\
               
\hline

                 &  5512665    &   $k$=  376 &  113.597   &  108.127   &   101.72   & \textbf{49.899}   \\
 ca-netscience   &  51681    &   $k$=  377 & 1.244    &  1.191  &   1.161  & \textbf{0.499}   \\
                 &  322    &   $k$=  378 &   0.044  &  0.044   &  0.041   & \textbf{0.025}   \\
\hline

                 &  14194614   &   $k$=  450 &  356.458   &  294.588   & 319.415    &  \textbf{181.777}  \\
 bio-celegans    &  97014    &   $k$=  451 &  2.746   &  2.682   & 2.554    &  \textbf{1.004}  \\
                 &  441    &   $k$=  452 &   0.064  &  0.060   &  0.062   &  \textbf{0.030}  \\
\hline

                 &  10914883    &   $k$=  513 &  312.366   &  296.099   &   292.690  & \textbf{150.889}   \\
 bio-diseasome   &  81422    &   $k$=  514 &  2.661   &  2.632   &  2.420   & \textbf{0.990}   \\
                 &  404    &   $k$=  515 &   0.062  &  0.060   &  0.060   &  \textbf{0.028}  \\
\hline

                 &  -    &   $k$=  886 &   -  &  -   &  -   & -   \\
soc-wiki-Vote    &  263965    &   $k$=  887 &   16.446  &  14.525   &  16.294   &  \textbf{5.428}  \\
                 &  727    &   $k$=  888 & 0.121    &  0.113   & 0.120    &  \textbf{0.052}  \\
\hline

                 &  -    &   $k$=  1455 &   -  &  -   &  -   & -   \\
 bio-yeast       &  558202    &   $k$=  1456 &   58.376  &  54.322   &  53.295   & \textbf{33.123}  \\
                 &  1057    &   $k$=  1457 & 0.195    &  0.195   & 0.184    &  \textbf{0.080}  \\
\hline

                 &  -    &   $k$=  4938 &  -   &   -  & -    &  -  \\
 inf-power       &  -    &   $k$=  4939 &  -   &  -   & -    &  -  \\
                 &  3712    &   $k$= 4940 & 1.782    &  1.819   &  1.659   &  \textbf{0.602}  \\
\hline

                 &  -    &   $k$=  7390 &   -  &   -  &   -  & -   \\
 bio-dmela       &  -    &   $k$=  7391 &   -  &  -   & -    & -   \\
                 &  6184    &   $k$=  7392 &  4.871   &  4.308   &  5.247   &  \textbf{1.348}  \\
\hline

                 &  -   &   $k$=  11201 &  -   &   -  &  -   &  -  \\
 ca-HepPh        &  -    &   $k$= 11202 & -    &  - &  -   & -   \\
                 &  10082    &   $k$= 11203 & 16.46    &  11.774   &   18.187  &  \textbf{4.652}  \\
\hline

                 &  -    &   $k$=  17900 &   -  &   -  &  -   & -   \\
 ca-AstroPh      &  -    &   $k$=  17901 &   -  &   -  &  -   &  -  \\
                 &  16836    &   $k$=  17902 &  47.925   &  38.774   &  57.252   &  \textbf{23.192}  \\
\hline

                 &  -    &   $k$=  56736 &  -   &   -  &   -  &  -  \\
 soc-brightkite  &  -    &   $k$=  56737 & -    &   -  &  -   &  -  \\
                 &  44033    &   $k$=  56738 & 415.790    &  288.238   &  432.875   &  \textbf{102.730}  \\
\hline
\end{tabular}
\end{center}
\end{table*}

\subsection{Enumerating All Connected Induced Subgraphs of Large Cardinality $k$}

In this experimental comparison, $k$ is set to $|V|-3$, $|V|-2$ and $|V|-1$ respectively. The running time of $VSimple$, $Simple$, $Simple-Forward$, and $TopDown$ on enumerating all connected induced subgraphs of large cardinality $k$ is shown in Table 5. We can observe that $TopDown$ outperforms the other three algorithms. On average, $TopDown$ is roughly 2.3 times, 2.0 times, and 2.4 times faster than $VSimple$, $Simple$, and $Simple$-$Forward$ for solved instances respectively. It is worthy and interesting to mention that $VSimple$ is faster than $Simple-Forward$ for some instances, and $Simple-Forward$ is more efficient than $VSimple$ for the other instances. After careful analysis of the characteristics of tested instances, we found that $VSimple$ is more efficient when the instance is a sparse graph (e.g. $\frac{|E|}{|V|}<3$, such instances includes ca-sandi-auths,ca-netscience,bio-diseasome,bio-yeast, and inf-power).

In the experiments, we also noticed that all the algorithms can only output the results within $10$ minutes for the large instances with $k=|V|-1$. To figure out which part of the enumeration is time-consuming, we disabled the instruction of writing the sets to the hard disk. In other words, we only record the running time of searching for the solutions without outputting the solutions. Figure 5 (a) and (b) compare the searching time and the outputting time (the time for outputting the solutions to hard disk) of $TopDown$ and $Simple-Forward$ respectively on large instances when $k = |V|-1$. From the comparison, we can see that the search time of $TopDown$ is only a small part of the overall running time. However, for $Simple-Forward$, the searching time is more than half of the overall running time. This also indicates that writing the solutions to the hard disk is the most time-consuming task in $TopDown$ on large instances when $k$ is large. Comparing the searching time, we can see that $TopDown$ is at least 5x and up to 19x faster than $Simple-Forward$ on the four benchmarks when $k=|V|-1$.

\begin{figure*}[t]
  \centering
  \includegraphics[width=0.65\hsize=0.8]{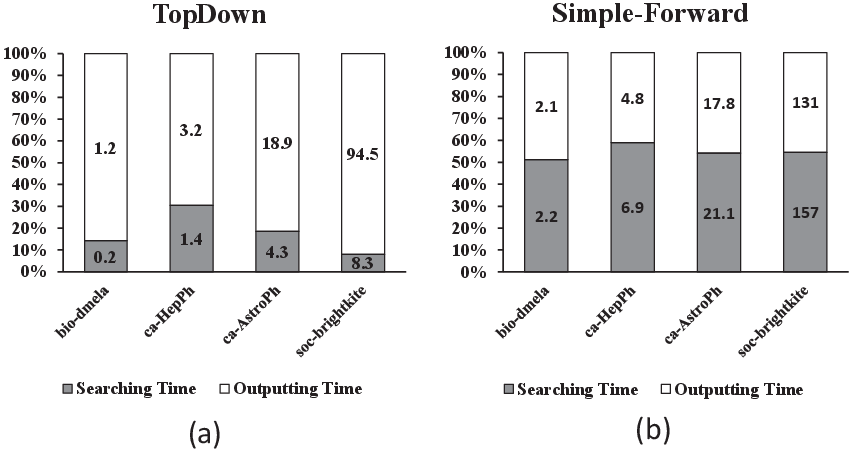}
  \caption{Comparing the searching time and outputting time.(a) time consumption of $TopDown$ (b) time consumption of $Simple$-$Forward$.}
  \label{f2}
\end{figure*}

Since outputting sets of large cardinality is very time-consuming, we propose to output the complement of the enumerated subgraphs when $k>n/2$. Figure 6 presents the speedup achieved by outputting the complement sets over outputting the original sets on the four large benchmarks ($k=|V|-1$). According to the results from Figure 6, it can be seen that outputting the complement sets is more efficient. The achieved speedup is ranging from 3x to 12x. Another significant benefit of outputting the complement sets instead of the original sets when the cardinality is large is that we can save huge amounts of storage space. As an example, for enumerating the connected subgraphs of cardinality $|V|-1$ from the network $soc-brightkite$, storing the complement sets requires only 510 KB, while storing the original sets takes up 15.8 GB. Thus, we also compared the running time of the algorithms in the case of outputting the complement sets. Figure 7 presents the speedup achieved by $TopDown$ over $Simple$-$Forward$ when outputting the complement sets on the four large benchmarks. The achieved speedup is ranging from 5x to 19x.

\begin{figure*}[t]
  \centering
  \includegraphics[width=0.38\hsize=0.8]{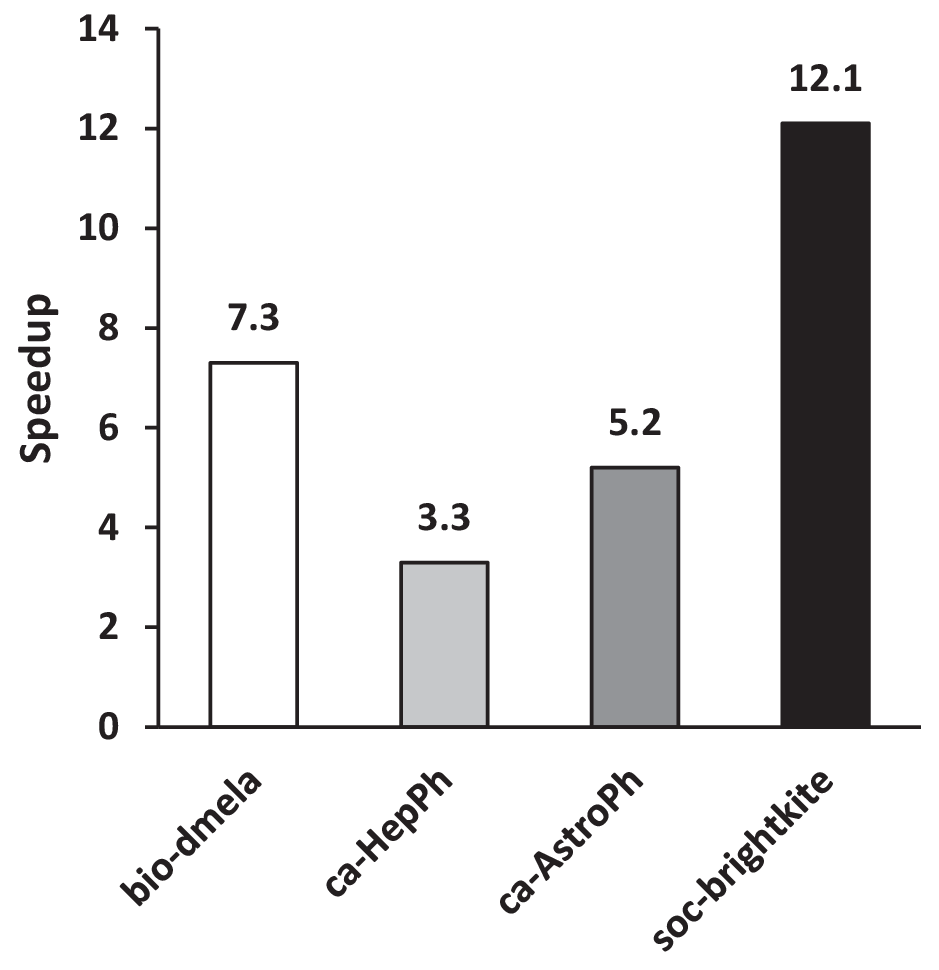}
  \caption{The speedup achieved by outputting the complement sets over outputting the original sets on the four large benchmarks using $TopDown$ ($k=|V|-1$).}
  \label{f2}
\end{figure*}

\begin{figure*}[t]
  \centering
  \includegraphics[width=0.38\hsize=0.8]{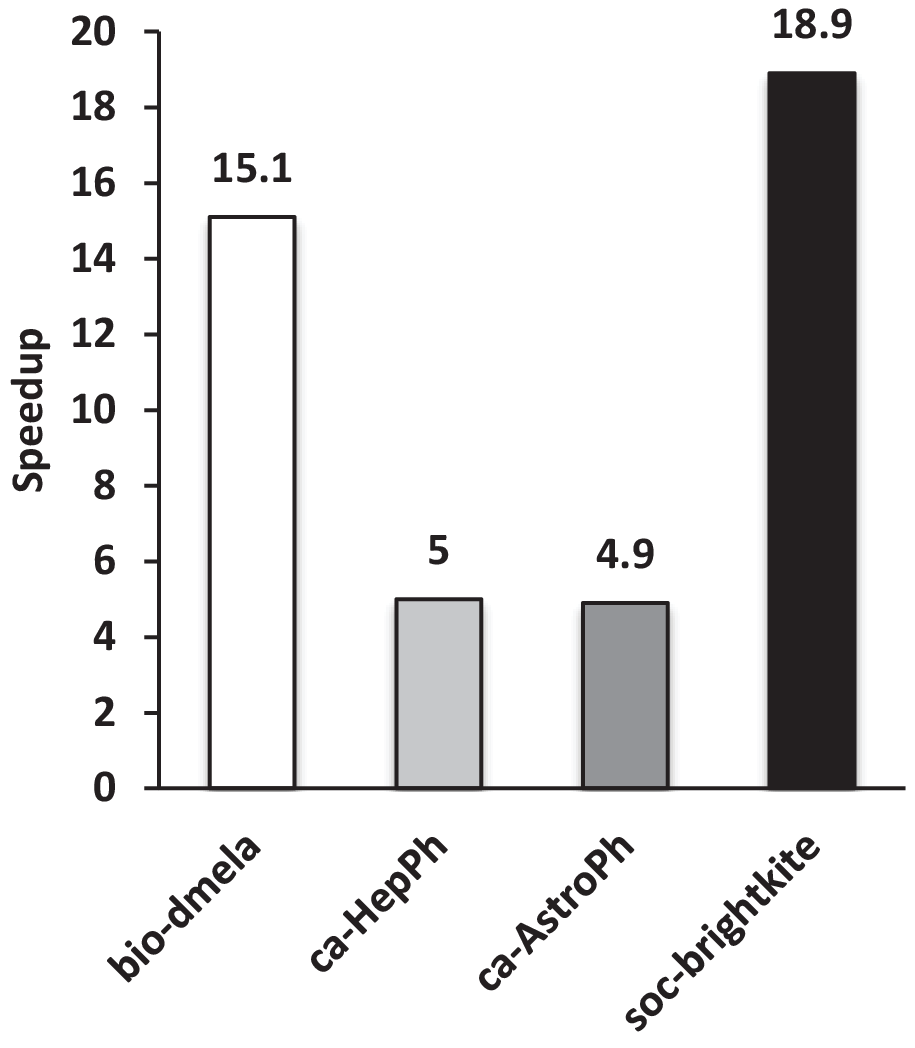}
  \caption{The speedup achieved by $TopDown$ over $Simple$-$Forward$ when outputting the complement sets on the four large benchmarks ($k=|V|-1$).}
  \label{f2}
\end{figure*}


Based on the experimental results of enumerating all connected induced subgraphs of small cardinality $k$ and enumerating all connected induced subgraphs of large  cardinality $k$, we can conclude that $VSimple$ outperforms the other algorithms for enumerating connected induced subgraphs of small cardinality, and $TopDown$ is the fastest algorithm for enumerating connected induced subgraphs of large cardinality.







\section{Discussions}

In this paper, we have presented a bottom-up algorithm and a top-down algorithm for enumerating connected induced subgraphs with cardinality $k$. The bottom-up algorithm is beneficial for instances with small $k$, and the top-down is more efficient for instances with large $k$. A sandwich approach that combines the advantage of the two approaches may be of interest for $k$ is close to $n/2$ in future work. A great deal of work has been done recently on graph mining, scalable graph mining using distributed computing is one of the most promising approaches \cite{13,14}. As mining all the connected induced subgraphs of a given cardinality from a large graph requires large amounts of computing and memory resources, applying parallel computing to speed up the mining process is necessary for large graphs. Similar to the work in \cite{13}, we can partition the graph into several subgraphs of similar size, the connected induced subgraphs of a given cardinality of the different partitions can be computed independently using the enumeration algorithms. The parallelization of mining all the connected induced subgraphs of a given cardinality can be part of our future work.

In \cite{7}, the authors proved the delay of $Simple$ by introducing a new pruning rule. The pruning rule can avoid unnecessary recursions, and it is essential to prove the delay of $Simple$. In this paper, we only presented the running time of $TopDown$. Our work will continue to find such a pruning rule and finally provide the delay of $TopDown$. Uehara has given an upper bound of $n\cdot\frac{(e\Delta)^{k}}{(\Delta-1)k}$ on the number of connected induced subgraphs of cardinality $k$ \cite{6}. Bollob\'{a}s has also presented another upper bound of $(e(\Delta-1))^{(k-1)}$ on the number of connected induced subgraphs of cardinality $k$ containing some vertex $v$\cite{15}. With this bound, Komusiewicz further showed that the overall number of connected induced subgraphs of $k$ is $(e(\Delta-1))^{(k-1)}\cdot(n/k)$ \cite{7,16}. The aforementioned bounds on the number of connected induced subgraphs of cardinality $k$ are all monotonically increasing with $k$. However, we can observe that the number of connected induced subgraphs of large cardinality $k$ (e.g., $k$ is close to $n$) may be decreased with the increase of $k$. Thus, it would be interesting to provide a bound that is better in line with actual trends.

\section{Conclusion}
In this paper, two algorithms that work in a bottom-up manner and a top-down manner respectively are proposed for mining connected induced subgraphs of a given cardinality. The bottom-up algorithm extends the subgraphs by absorbing neighbor vertices until the size of the subgraph is $k$. The top-down algorithm enumerates all the connected induced subgraphs by deleting so-called deletable vertices from large subgraphs. Experiments on real-world networks with different sizes confirmed the efficiency of the proposed algorithms. The proposed top-down algorithm can achieve up to 19x speedup over the state-of-the-art algorithms in cases of large cardinality. From an engineering point of view, the proposed algorithms are very practical as they can be used as the backbone of mining the subgraphs with interesting properties from networks in the area of bioinformatics, networks, and information retrieval. Future work will focus on theoretical perspectives by providing the delay of the top-down algorithm and the upper bound on the number of connected induced subgraphs of cardinality $k$.

\label{last-page}
\end{multicols}
\label{last-page}
\end{document}